\documentclass{article}
\usepackage{amsfonts}
\usepackage{amsmath}
\usepackage{amssymb}
\usepackage{graphicx}
\setcounter{MaxMatrixCols}{30}
\providecommand{\U}[1]{\protect\rule{.1in}{.1in}}
\newtheorem{theorem}{Theorem}

\newtheorem{proposition}[theorem]{Proposition}
\newtheorem{remark}[theorem]{Remark}

\newenvironment{proof}[1][Proof]{\noindent\textbf{#1.}}{\ \rule{0.5em}{0.5em}}
\begin{document}

\title{ The R\'{e}nyi entropy of L\'{e}vy distribution}
\author{Giorgio SONNINO\\Department of Theoretical Physics and Mathematics, Universit\'{e}\\Libre de Bruxelles (ULB), Campus Plain CP 231, Boulevard de \\Triomphe, 1050 Brussels, Belgium.\\ Royal Military School (RMS),\\Av. de la Renaissance 30, 1000 Brussels, Belgium\\ Email: gsonnino@ulb.ac.be
\and Gy\"{o}rgy STEINBRECHER\\Association EURATOM-MEdC, University of Craiova, \\ Physics Department, A. I. Cuza 13, 200585 Craiova, Romania. \\ Email: gyorgy.steinbrecher@gmail.com 
\and Alberto SONNINO\\ Ecole Polytechnique de Louvain (EPL) \\ Universit{\'e} Catholique de Louvain (UCL), Rue Archim$\grave{\rm e}$de\\1 bte L6.11.01, 1348 Louvain-la-Neuve - Belgium. \\ Email: alberto.sonnino@gmail.com}
\maketitle

\begin{abstract}
The equivalence between non-extensive C. Tsallis entropy and the

extensive entropy introduced by Alfr\'{e}d R\'{e}nyi \ is discussed. The
R\'{e}nyi entropy is studied from the perspective of the geometry of the
Lebesgue and generalised, exotic Lebesgue spaces. A duality principle is
established. The R\'{e}nyi entropy for the L\'{e}vy distribution, in the
domain when the nunerical methods fails, is approximated by asymptotic
expansion for the large values of the R\'{e}nyi parameter.

\end{abstract}

\section{Introduction}

The independent discovery of the equivalent generalizations of the
Shannon-Boltzmann entropy by A. R\'{e}nyi \cite{Renyi1}, by pure axiomatic
reasoning respectively by C. Tsallis, from physical considerations has a great
influence both on statistical physics as well as on mathematical methods in
statistics, engineering and informatics. In the case of systems with small
degree of freedom the entropy of C. Tsallis and A. R\'{e}nyi are completely
equivalent, in the sense that the Tsallis entropy can be computed easily when
the R\'{e}nyi entropy is known and conversely. The axioms that uniquely
determines he R\'{e}nyi entropy can be exactly reformulated for the Tsallis
entropy, despite the R\'{e}nyi formulation is closely related to the formalism
of topological metric vector spaces. \ The MaxEnt principle formulated for the
R\'{e}nyi entropy can be reformulated as a problem of geometry on convex sets
in Banach or complete metric linear spaces. The result of computation of
probability density functions (PDF) by MaxEnt principle with or without
restrictions give the same PDF both for R\'{e}nyi and Tsallis. Nevertheless
the entropy introduced by A. R\'{e}nyi, has many elegant physical and
mathematical properties. First of all, the \ R\'{e}nyi entropy is are
additive, so the study of the thermodynamic limit of non interacting or weakly
interacting subsystems can be studied by suitable perturbation methods . We
can introduce also the intensive quantity, the R\'{e}nyi entropy per particle,
or per volume. Despite this property ( the possibility to express the entropy
of composed system by entropy of subsystems) can be transferred to the Tsallis
entropy, the formula is quite complicated . When the number of non-interacting
subsystems is of the order of magnitude of the Avogadro number, the numerical
value of the Tsallis entropy contains a very large exponent. \ In the case of
the continuos distributions, the R\'{e}nyi entropy has several new advantages
over the Tsallis entropy. First, at the rescaling of the variables, by the
change of units measurements, the R\'{e}nyi entropy changes by an additive
constant, exactly in the same way as the classical Shannon Boltzmann Gibbs
entropy. \ Moreover, when using R\'{e}nyi entropy , we can use instead of the
probability density function the particle density function, or in the case of
identical particles, the mass density function. In all of these cases the
variation of R\'{e}nyi entropy remains unchanged, like in the case of the
Shannon-Boltzmann-Gibbs entropy. 

The adaptation of the Tsallis entropy to these changes (scaling of variables,
replacing PDF by particle density function, study of systems composed of
subsystems whose number is of the order of magnitude of Avogadro number) is
still possible. In fact all of the reasonings performed in the framework of
the R\'{e}nyi entropy can be translated easily in the language of the Tsallis
entropy. In particular in the case of composing a large system from
independent subsystems, the addition of the R\'{e}nyi entropy is translated in
a new, apparently non linear operation, that allows to compute the Tsallis
entropy of the composed system. So it remains at the taste of the researchers
the choice. 

In conclusion, the physical arguments for the use of Tsallis entropy
(\cite{Tsallis1}) are valid also for the use of the R\'{e}nyi entropy, as well
as the Markov chain argument (\cite{Renyi1}) for the use of the R\'{e}nyi
entropy applies to Tsallis entropy.

Finally, we can relate the R\'{e}nyi entropy  to well studied concepts in
functional analysis: Banach space norms or pseudo-norms in complete metric
vector spaces. The interpretation of the R\'{e}nyi entropy in the language of
convex analysis and geometry in $L_{p}$ spaces , allows to use mathematical
results, that could give a simple proof to useful properties of the Tsallis
entropy. 

The structure of this work is the following. In Section \ref{markSect2}, for
the sake of self consistency, we expose the definition and main properties of
the R\'{e}nyi entropy. In Section \ref{MarkSect3} we discuss the properties of
the R\'{e}nyi entropy from the viewpoint of geometry of Banach spaces
respectively from the perspective of the geometry of\ exotic, non local convex
Lebesgue spaces with exponent $0<p<1$. Here, in the subsection
\ref{MarkSubsectDuality} we establish a new result, the duality property of
the R\'{e}nyi entropy, that we consider that will be useful in numerical
methods for performing optimized approximation of the solutions of the
Fokker-Planck equations. In the Section \ref{MArkSect4Levy} we compute the
R\'{e}nyi entropy for the Levy distribution in a domain when the numerical
methods fails.

\section{The equivalence of entropy of C. Tsallis and A.
R\'{e}nyi\label{markSect2}}

\subsection{Discrete distribution}

In the simplest possible case of discrete probability field with $N$ states
and with associated probabilities $\{p_{1},...,p_{N}\}$ , the Tsallis
respectively R\'{e}nyi entropies, with index $q>1$, are defined as follows%
\begin{align}
S_{T,q}(p_{i})  &  =\frac{1}{1-q}\left(  1-%
{\displaystyle\sum\limits_{i=1}^{N}}
p_{i}^{q}\right) \label{1}\\
S_{R,q}(p_{i})  &  =\frac{1}{1-q}\log\left(
{\displaystyle\sum\limits_{i=1}^{N}}
p_{i}^{q}\right)  \label{2}%
\end{align}
It is easy to see that in the limit case $q\rightarrow1$ both $S_{T,q}(p_{i})$
and $S_{R,q}(p_{i})$ approach the Shannon entropy.%
\[
S_{Shannon}(p_{i})=-%
{\displaystyle\sum\limits_{i=1}^{N}}
p_{i}\log p_{i}%
\]
Their equivalence, when discussing the problem of finding the probability
distributions that obey maximal entropy principle with some restrictions,
results in a straightforward manner from Eqs(\ref{1}, \ref{2})
\begin{equation}
S_{T,q}(p_{i})=\frac{1}{\ q-1}\left[  \exp\left(  \left(  q-1\right)
S_{R,q}(p_{i})\right)  -1\right]  \label{3}%
\end{equation}

\begin{remark}
\label{RemTsalisEqRennyi}Observe that Eq.(\ref{3}) will be valid also in the
case of continuos, multi variate distributions
\end{remark}

In the subsequent part, we will discuss the R\'{e}nyi entropies also for the
case $0<q<1$. It is clear from Eq.(\ref{3}) that in the case of the small
system both entropies can be computed easily and give the same amount in
information. In particular the dependence of $S_{T,q}(p_{i})$ over
$S_{R,q}(p_{i})$ is monotonous, for all $0<q$, so the maximal entropy
principles, eventually with restrictions, in both case give rise to the same
probabilities. Moreover, in the R\'{e}nyi formulation, we can consider the
limit $q\rightarrow\infty$ too. From Eq.(\ref{2}) in the limit $q\rightarrow
\infty$ we still obtain a meaningful result:
\[
S_{R,q}(p_{i})=-\frac{q}{q-1}\log\left(
{\displaystyle\sum\limits_{i=1}^{N}}
p_{i}^{q}\right)  ^{1/q}\underset{q\rightarrow\infty}{\longrightarrow}%
-\log\left[  \underset{i}{\max}\left(  p_{1},...,p_{n}\right)  \right]  >0
\]

Observe that in this limit the maximal entropy principle give rise to the
min-max problem with restrictions In the case of the system composed of large
number of identical and approximately independent subsystems, the use of
$S_{R,q}$ is more suitable, because it is extensive and we can define the
intensive "specific R\'{e}nyi entropy". Indeed consider a complex system with
$M\times N$ states specified by \ the probabilities $\{p_{i,j}|1\leq i\leq
N;1\leq j\leq M\}$ . In this case the R\'{e}nyi entropy is%
\begin{equation}
S_{R,q}(p_{i,j})=\frac{1}{1-q}\log\left(
{\displaystyle\sum\limits_{i=1}^{N}}
{\displaystyle\sum\limits_{j=1}^{M}}
p_{i,j}^{q}\right)  \label{4}%
\end{equation}

In the case when the large system is composed of independent subsystems with
$N$ respectively $M$ states with probabilities $\{p_{1}^{\prime}%
,...,p_{N}^{\prime}\}$ respectively $\{p_{1}^{^{\prime\prime}},...,p_{M}%
^{^{\prime\prime}}\}$ we have
\[
p_{i,j}=p_{i}^{\prime}p_{j}^{^{\prime\prime}}%
\]

and from Eq.(4) results
\begin{equation}
S_{R,q}(p_{i,j})=S_{R,q}(p_{i}^{\prime})+S_{R,q}(p_{i}^{\prime\prime})
\label{5}%
\end{equation}

In the continuation, due to Eqs.(\ref{3} , \ref{5} ) we discuss the R\'{e}nyi
entropy. \ The Eq.(\ref{5}) can be easily generalized, so in the case of a
large system composed of a large number of $\mathcal{N}$ identical and
independent (non-interacting) subsystems , each of them with $N$ states and
with associated probabilities $\{p_{1},...,p_{N}\}$ , the entropy
$S_{\mathcal{N},q}$ of the resulting large system is simply%
\begin{equation}
S_{\mathcal{N},q}=\mathcal{N}\ S_{R,q}(p_{i}) \label{6}%
\end{equation}
which is advantageous when $\mathcal{N}\ \ $\ is of the order of magnitude of
Avogadro number. For physical systems with short\ range or weak interactions,
it is meaningful to attempt to compute $S_{\mathcal{N},q}/\mathcal{N}$ in the
thermodynamic limit.

Like Shannon entropy, the R\'{e}nyi entropy is determined uniquely by a system
of physically meaningful axioms (\cite{Renyi1}).

\section{General distributions, R\'{e}nyi entropy and Lebesgue space norm\label{MarkSect3}}

Suppose we have a measure space $(\Omega,\mathcal{\ }m)$ with positive not
necessary finite, measure $m$ over the space $\Omega$ . For sake of simplicity
the $\sigma$ algebra will be omitted. Suppose we have the probability density
function $\rho(\mathbf{x})$ on $\Omega$ , where\ $\mathbf{x}\in\Omega$, such
that
\begin{equation}%
{\displaystyle\int\limits_{\Omega}}
\rho(\mathbf{x})dm(\mathbf{x})=1 \label{6.1}%
\end{equation}

\begin{remark}
\label{RemNonZeroProbOnOmega}Without loss of generality we can consider that
$\rho(\mathbf{x})>0$ excepting a null set $A$ that is statistically
negligible, i.e. $m(A)=0\,$
\end{remark}

In this case the Tsallis and R\'{e}nyi entropies, for $q>0$ and $q\neq1$ are
defined as follows, if the integrals exists%

\begin{align}
S_{T,q}(\rho)  &  =\frac{1}{1-q}\left(  1-%
{\displaystyle\int\limits_{\Omega}}
\left[  \rho(\mathbf{x})\right]  ^{q}dm(\mathbf{x})\ \right) \label{6.2}\\
S_{R,q}(\rho)  &  =\frac{1}{1-q}\log\left[
{\displaystyle\int\limits_{\Omega}}
\left[  \rho(\mathbf{x})\right]  ^{q}dm(\mathbf{x})\right]  \label{6.3}%
\end{align}

Observe that the relation Eq.( \ref{3} ) between the entropies remains valid
.  From the Eq.(\ref{6.3}) results:%
\begin{align}
S_{R,q}(\rho) &  =\frac{q}{1-q}\log\left[  \left\Vert \rho\right\Vert
_{L_{q}(\Omega,dm)}\right]  ;q>1\label{6.31}\\
S_{R,q}(\rho) &  =\frac{1}{1-q}\log N_{L_{q}(\Omega,dm)}(\rho
);0<q<1\label{6.311}%
\end{align}

where the notations from (\cite{ReedSimon}, \cite{Rudin}) \ \ are used.

The unusual $L_{q}(\Omega,dm)$ spaces with $0<q<1$ are important for the study
of the distributions with heavy tail (\cite{LuschgiPages}, \cite{SGW}) .

It is convenient to define the probability measure
\begin{equation}
dP(x)=\rho(\mathbf{x})dm(\mathbf{x}) \label{6.32}%
\end{equation}
that has the property%
\begin{equation}%
{\displaystyle\int\limits_{\Omega}}
dP(\mathbf{x})=1 \label{6.4}%
\end{equation}

The R\'{e}nyi entropy functional, according to Eq.(\ref{6.3}), associated to
this distribution is (see Remark(\ref{RemNonZeroProbOnOmega}))
\begin{equation}
S_{q}\{\rho\}=-\log\left[
{\displaystyle\int\limits_{\Omega}}
\left[  \rho(x)\right]  ^{q-1}dP(x)\right]  ^{\frac{1}{q-1}} \label{7}%
\end{equation}

The following proposition results from Eq.(\ref{7}) and are well known.

\begin{proposition}
\begin{remark}
\label{PropMonotony}In the case $\ $where $\int\limits_{\Omega}dP(x)=1$, i.e.
$P$ is a probability measure, then for $0<a_{1}<a_{2}$ from the H\"{o}lder
inequality (\cite{ReedSimon}, \cite{Rudin}) results
\begin{equation}
\left[
{\displaystyle\int\limits_{\Omega}}
|f(\mathbf{x})|^{a_{1}}dP(\mathbf{x})\right]  ^{1/a_{1}}\leq\left[
{\displaystyle\int\limits_{\Omega}}
|f(\mathbf{x})|^{a_{2}}dP(\mathbf{x})\right]  ^{1/a_{2}} \label{11}%
\end{equation}

\end{remark}
\end{proposition}

\begin{proposition}
\label{PropEntropyMonotony}The R\'{e}nyi and Tsallis entropies for
$0<q\ $decreases with respect to $q$ .
\end{proposition}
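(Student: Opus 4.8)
The plan is to reduce the monotonicity of the R\'{e}nyi entropy in $q$ to the power-mean inequality already recorded in Proposition \ref{PropMonotony}. First I would rewrite $S_{R,q}(\rho)$ in the form that isolates an $L_p$-type average against the probability measure $dP$ defined in Eq.(\ref{6.32}). From Eq.(\ref{7}) we have
\begin{equation}
S_{R,q}(\rho)=-\log M(q),\qquad M(q)=\left[\int\limits_{\Omega}\left[\rho(x)\right]^{q-1}dP(x)\right]^{\frac{1}{q-1}},
\end{equation}
so since $-\log$ is strictly decreasing, the claim ``$S_{R,q}$ decreases in $q$'' is equivalent to ``$M(q)$ is nondecreasing in $q$''.

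Next I would apply the H\"{o}lder-type inequality of Proposition \ref{PropMonotony} with $f=\rho$. Set $a_i=q_i-1$ for $0<q_1<q_2$; when $q_1,q_2>1$ these are positive exponents and Eq.(\ref{11}) gives directly
\begin{equation}
M(q_1)=\left[\int\limits_{\Omega}\rho^{\,a_1}dP\right]^{1/a_1}\leq\left[\int\limits_{\Omega}\rho^{\,a_2}dP\right]^{1/a_2}=M(q_2),
\end{equation}
hence $S_{R,q_1}\geq S_{R,q_2}$. For the Tsallis entropy the same conclusion follows because Eq.(\ref{3}) exhibits $S_{T,q}$ as a monotone increasing function of $S_{R,q}$ (the map $s\mapsto(q-1)^{-1}[\exp((q-1)s)-1]$ is increasing for every $q>0$), so monotonicity in $q$ transfers; alternatively one checks directly from Eq.(\ref{6.2}) that $q\mapsto\int\rho^{q}dm$ combined with the prefactor $1/(1-q)$ has the stated behaviour, but routing through the R\'{e}nyi case is cleaner.

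The main obstacle is handling the ranges $0<q<1$ and the crossover at $q=1$, where $a=q-1$ becomes negative or zero and Eq.(\ref{11}) as stated (which assumes $0<a_1<a_2$) does not literally apply. For $0<q_1<q_2<1$ I would invoke the reverse power-mean inequality for negative exponents, or equivalently apply Eq.(\ref{11}) to $g=1/\rho$ with positive exponents $1-q_2<1-q_1$ and then invert; for the point $q=1$ I would note that $M(q)\to\exp(\int\rho\log\rho\,dP)$ and $S_{R,q}\to S_{\mathrm{Shannon}}$ by the limit already observed after Eq.(\ref{2}), so continuity patches the two open intervals together into monotonicity on all of $(0,\infty)$. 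A clean way to unify all cases is to show $\log M(q)$ is the value of the map $q\mapsto\frac{1}{q-1}\log\mathbb{E}_P[\rho^{q-1}]$, which is the (shifted) cumulant/moment structure of the random variable $\log\rho$ under $P$, and is nondecreasing by convexity of $t\mapsto\log\mathbb{E}_P[e^{t\log\rho}]$ together with the standard fact that $t\mapsto t^{-1}(\Lambda(t)-\Lambda(0))$ is nondecreasing for a convex $\Lambda$ with $\Lambda(0)=0$; strictness holds unless $\rho$ is $P$-a.s.\ constant, i.e.\ unless $\rho$ is uniform on $\Omega$.
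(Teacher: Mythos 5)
Your proof follows essentially the same route as the paper: for $q_1,q_2>1$ you apply the power-mean inequality of Proposition \ref{PropMonotony} with $f=\rho$ and exponents $q_i-1$, for $0<q_1<q_2<1$ you pass to $f=1/\rho$ with exponents $1-q_i$ (the paper's Eq.(\ref{11a})), and you transfer the result to the Tsallis entropy via the monotone dependence in Eq.(\ref{3}) --- all exactly as in the paper's proof. Your two additions are genuinely worthwhile but minor: you patch the two open intervals together across $q=1$ by continuity (a point the paper leaves implicit), and you sketch an alternative unified argument via convexity of $t\mapsto\log\mathbb{E}_P[e^{t\log\rho}]$, which would handle all $q>0$ at once.
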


\begin{proof}
In the case $1<q_{1}<q_{2}$ we use Eq.(\ref{11}) with $a_{1}=q_{1}%
-1,a_{2}=q_{2}-1$ and $f(\mathbf{x})=\rho(\mathbf{x})$ \ combined with
Eq.(\ref{7}). In the case $0<q_{1}<q_{2}<1$ we have (See Remark
(\ref{RemNonZeroProbOnOmega}))%
\begin{equation}
S_{q}\{\rho\}=\log\left[
{\displaystyle\int\limits_{\Omega}}
\left[  \frac{1}{\rho(x)}\right]  ^{1-q}dP(x)\right]  ^{\frac{1}{1-q}}
\label{11a}%
\end{equation}
By using Eq.(\ref{11}) with $a_{1}=1-q_{2},a_{2}=1-q_{1}$ and $f(\mathbf{x}%
)=1/\rho(\mathbf{x})$ \ combined with Eq.(\ref{11a}) completes the proof. The
corresponding property for the Tsallis entropy results from Eq.(\ref{3}) and
Remark(\ref{RemTsalisEqRennyi}).
\end{proof}

Now it is easy to prove the following

\begin{proposition}
\label{PropRenyiApprShannon}In the limit $q\searrow1$ the R\'{e}nyi entropy
approaches the Shannon entropy .
\end{proposition}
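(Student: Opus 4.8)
The plan is to show that $S_q\{\rho\} \to S_{\mathrm{Shannon}}(\rho)$ as $q \searrow 1$ by analyzing the integral representation in Eq.~(\ref{7}). Write $I(q) = \int_\Omega [\rho(x)]^{q-1}\, dP(x)$, so that $S_q\{\rho\} = -\frac{1}{q-1}\log I(q)$. Observe that $I(1) = \int_\Omega dP(x) = 1$ by Eq.~(\ref{6.4}), so $\log I(q) \to 0$ and the expression is a $0/0$ indeterminate form as $q \searrow 1$. The natural approach is therefore L'H\^opital's rule (equivalently, differentiation of $I(q)$ at $q=1$): formally,
\begin{equation}
\lim_{q \searrow 1} S_q\{\rho\} = -\lim_{q\searrow 1}\frac{\log I(q)}{q-1} = -\frac{I'(1)}{I(1)} = -I'(1),
\end{equation}
and differentiating under the integral sign, $I'(q) = \int_\Omega [\rho(x)]^{q-1}\log\rho(x)\, dP(x)$, so $I'(1) = \int_\Omega \log\rho(x)\, dP(x) = \int_\Omega \rho(x)\log\rho(x)\, dm(x) = -S_{\mathrm{Shannon}}(\rho)$, using Eq.~(\ref{6.32}). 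This gives the claimed limit.

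First I would set up the substitution $q = 1 + \epsilon$ with $\epsilon \searrow 0$ and expand $[\rho(x)]^\epsilon = \exp(\epsilon\log\rho(x)) = 1 + \epsilon\log\rho(x) + o(\epsilon)$ pointwise, so that $I(1+\epsilon) = 1 + \epsilon\int_\Omega \log\rho(x)\,dP(x) + o(\epsilon)$; then $\log I(1+\epsilon) = \epsilon\int_\Omega\log\rho\,dP + o(\epsilon)$ and dividing by $\epsilon$ and negating yields the result. Alternatively, and more cleanly for a rigorous argument, I would invoke monotonicity: by Proposition~\ref{PropEntropyMonotony} the map $q \mapsto S_q\{\rho\}$ is monotone decreasing on $(0,\infty)$, hence the one-sided limits $\lim_{q\searrow 1}$ and $\lim_{q\nearrow 1}$ both exist in $[-\infty,+\infty]$, and it suffices to identify their common value with $S_{\mathrm{Shannon}}(\rho)$ via the differentiation argument above.

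The main obstacle is the justification of differentiation under the integral sign (equivalently, the interchange of limit and integral in the expansion of $[\rho(x)]^\epsilon$), since $\rho$ may be unbounded above and may approach $0$, making $\log\rho$ unbounded of both signs, and $m$ need not be finite. The honest fix is to assume the relevant integrability — that $\rho\log\rho \in L^1(\Omega,dm)$, i.e. that $S_{\mathrm{Shannon}}(\rho)$ is finite — which is implicitly required for the statement to be meaningful, and then to produce a dominating function: for $q$ in a small interval $(1,1+\delta)$ one bounds $[\rho(x)]^{q-1}|\log\rho(x)|$ on the region $\{\rho \geq 1\}$ by $[\rho(x)]^{\delta}|\log\rho(x)| \leq C_\delta\,[\rho(x)]^{q_0-1}$ for some $q_0 > 1$ with $\rho \in L^{q_0}$ (using that $S_{R,q_0}$ is assumed finite, as in Section~\ref{MArkSect4Levy}), and on $\{\rho < 1\}$ by $|\log\rho(x)| \cdot \rho(x) \in L^1(dm)$ after multiplying through by the $dP$-density; dominated convergence then legitimizes the interchange and completes the proof. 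I expect the author's proof to be the short formal L'H\^opital computation, possibly glossing over these integrability technicalities.
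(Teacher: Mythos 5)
Your proposal is correct and follows essentially the same route as the paper: the paper likewise combines the monotonicity established via the H\"older inequality (to guarantee the limit exists) with a series expansion of $[\rho(x)]^{q-1}$ around $q=1$ to identify the limit with $-\int_\Omega \rho\log\rho\, dm$. As you anticipated, the paper states the expansion step by appealing to analyticity of $q\mapsto\Vert\rho\Vert_{L_{q-1}(\Omega,dP)}$ without the dominated-convergence justification you supply, and it additionally remarks on extending to the case where $\rho$ vanishes on a set of positive measure by approximation and the continuous extension of $x\log x$ at $0$.
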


\begin{proof}
From Proposition (\ref{PropMonotony}) results that the limit $\underset
{q\searrow1}{\lim}\left\Vert \rho\right\Vert _{L_{q-1}(\Omega,dP)}$
\ \ exists. In the the case when $\rho(\mathbf{x})>0$ the function
$f(q)=\left\Vert \rho\right\Vert _{L_{q-1}(\Omega,dP)}$ is analytic \ near
$q=1$. By series expansion in Eq.(\ref{9}) and Eq.( \ref{6.32} )results that%
\[
\underset{q\searrow1}{\lim}S_{q}\{\rho\}=-%
{\displaystyle\int\limits_{\Omega}}
\rho(\mathbf{x})\log[\rho(\mathbf{x})]dm(\mathbf{x})=S_{Shannon}\{\rho\}
\]
The the extension to the case when on some domains $\rho(\mathbf{x})=0$ is
obtained by approximating the PDF $\rho(\mathbf{x})$with non zero PDF . By
extending the function $f(x)=x\log x$ by continuity such that $f(0)=0$
completes the proof.$\ $
\end{proof}

\subsection{Duality properties\label{MarkSubsectDuality}}

Let $\ p\geq1$ and $\ q\geq1$ such that $1/p+1/q=1$. We define the scalar
product of the functions $f(x)\in L_{p}(\Omega,dm)$ and $g(x)\in L_{q}%
(\Omega,dm)$ as follows%
\begin{equation}
\left\langle f,g\right\rangle =%
{\displaystyle\int\limits_{\Omega}}
f(\mathbf{x})g(\mathbf{x})]dm(\mathbf{x}) \label{11.01}%
\end{equation}

From duality between $L_{p}(\Omega,dm)$ and $L_{q}(\Omega,dm)$ we
have(\cite{ReedSimon}, \cite{Rudin})
\begin{equation}
\left\Vert f\right\Vert _{L_{p}(\Omega,dm)}=\underset{g\in B_{q}}{\sup
}\left\vert \left\langle f,g\right\rangle \right\vert \label{11.02}%
\end{equation}

where $B_{q}$ is the unit sphere in the space $L_{q}(\Omega,dm):$%
\begin{equation}
B_{q}=\left\{  g|g\in L_{q}(\Omega,dm);~\left\Vert g\right\Vert _{L_{q}%
(\Omega,dm)}=1\right\}  \label{11.03}%
\end{equation}

From Eqs. (\ref{11.02}, \ref{11.03}) and from the relation between norm and
entropy Eq.(\ref{6.31}), we obtain the following duality between entropies%
\begin{align}
S_{p}\{\rho\}  &  =\frac{p}{1-p}\underset{\rho^{\prime}\in B_{q}}{\sup
}\left\langle \rho,\rho^{\prime}\right\rangle \label{11.04}\\
\frac{1}{p}+\frac{1}{q}  &  =1
\end{align}

This relation can be reformulated in the term of entropies : We have
$\{\rho^{\prime}|\rho^{\prime}\in B_{q}\}=\{\rho^{\prime}|S_{q}(\rho^{\prime
})=0\}$ so we obtain from Eq.(11.04%
\begin{equation}
S_{p}\{\rho\}=\frac{p}{1-p}\underset{S_{q}(\rho^{\prime})=0}{\sup}\left\langle
\rho,\rho^{\prime}\right\rangle ;~\frac{1}{p}+\frac{1}{q}=1 \label{11.05}%
\end{equation}

\section{The R\'{e}nyi entropy of the symmetric L\`{e}vy
distributions\label{MArkSect4Levy}}

The L\`{e}vy  distribution plays \ an important role in the study of processes
with heavy tail PDF, that are encountered in the statistical physics of
anomalous transport processes. In this section we compute the R\'{e}nyi
entropy of the PDF of Levy distribution in the range of large values of the
R\'{e}nyi $q$ parameter, the domain when the numerical method does not works.

The generating function $g_{\alpha}(t)$ of the symmetric $\alpha$ stable
distribution (L\`{e}vy), with convenient normalization, having PDF
$\rho_{\alpha}(x)$ , is given by%
\begin{equation}
g_{\alpha}(t)\equiv\int\limits_{-\infty}^{\infty}\rho_{\alpha}(x)\exp\left(
itx\right)  dx=\exp\left(  -k\left\vert t\right\vert ^{\alpha}\right)
\label{L1}%
\end{equation}

Because the stability of the distribution specified by the generating function
$g_{\alpha}(t)$ is related to the variable $x$, in the sense that the sum of
independent variables distributed according to L\`{e}vy distribution is again
from the same family, it is natural to use the Lebesgue measure for the
definition of the R\'{e}nyi entropy%
\begin{equation}
S_{R,q}(\rho_{\alpha})=\int\limits_{-\infty}^{\infty}\left[  \rho_{\alpha
}(x)\right]  ^{q}dx\label{L2}%
\end{equation}

The inversion of the Fourier transform in Eq.(\ref{L1}) cannot be obtained via
elementary functions, nevertheless we can obtain easily the first terms in the
Taylor expansion near $x=0$ as follows. From Eq.(\ref{L1}) results (with the
normalization $k=1$)
\begin{equation}
\rho_{\alpha}(x)=\frac{1}{2\pi}\int\limits_{-\infty}^{\infty}\exp\left(
-itx-\left\vert t\right\vert ^{\alpha}\right)  dx\label{L3}%
\end{equation}

From Eq.(\ref{L3}) results that for $\alpha>1$ the PDF $\rho_{\alpha}(x)$ is
in fact an entire analytic function in the variable $x$ , for $\alpha=1$ is
meromorphic with poles at $x=\pm i$ and for $0<\alpha<1$ is from the
$C^{\infty}$ class. Consequently from Eq.(\ref{L3}), after taking into account
the symmetry, we obtain
\begin{align}
\rho_{\alpha}(0) &  =\frac{1}{\pi}\int\limits_{0}^{\infty}\exp\left(
-\left\vert t\right\vert ^{\alpha}\right)  dx=\frac{1}{\pi\alpha}\Gamma
(\frac{1}{\alpha})\label{L4}\\
\left[  \frac{d\rho_{\alpha}(x)}{dx}\right]  _{x=0} &  =\left[  \frac
{d^{3}\rho_{\alpha}(x)}{dx^{3}}\right]  _{x=0}=0\label{L5}\\
\left[  \frac{d^{2}\rho_{\alpha}(x)}{dx^{2}}\right]  _{x=0} &  =-\frac{1}%
{\pi\alpha}\Gamma(\frac{3}{\alpha})\label{L6}\\
\left[  \frac{d^{4}\rho_{\alpha}(x)}{dx^{4}}\right]  _{x=0} &  =-\frac{1}%
{\pi\alpha}\Gamma(\frac{5}{\alpha})\label{L7}%
\end{align}
The integral from Eq.(\ref{L2}) can be rewritten as follows%
\begin{align}
S_{R,q}(\rho_{\alpha}) &  =\int\limits_{-\infty}^{\infty}\exp\left[
f_{\alpha}(x)\right]  dx\label{L8}\\
f_{\alpha}(x) &  =\log\rho_{\alpha}(x)\label{L9}%
\end{align}

The asymptotic expansion of $S_{R,q}(\rho_{\alpha})$ for the large values of
the R\'{e}nyi parameter $q$ can be obtained according to Laplace method,
taking into account that according to Eqs.(\ref{L9}, \ref{L5})%
\[
\left[  \frac{df_{\alpha}(x)}{dx}\right]  _{x=0}=\left[  \frac{d^{3}f_{\alpha
}(x)}{dx^{3}}\right]  _{x=0}=0
\]
Consequently we obtain
\begin{equation}
S_{R,q}(\rho_{\alpha})\underset{q\rightarrow+\infty}{\asymp}\frac{\exp\left[
qf(0)\right]  \sqrt{2\pi\ }}{\sqrt{q\left\vert f^{(2)}(0)\right\vert }}\left[
1+\frac{f^{(4)}(0)}{72q\left\vert f^{(2)}(0)\right\vert }\right]  \label{L10}%
\end{equation}

By using Eqs.(\ref{L9}, \ref{L4}, \ref{L6}, \ref{L7}) we obtain%
\begin{align}
f^{(2)}(0)  &  =-\frac{\Gamma(\frac{3}{\alpha})}{\Gamma(\frac{1}{\alpha}%
)}\label{L11}\\
f^{(4)}(0)  &  =\frac{\Gamma(\frac{5}{\alpha})}{\Gamma(\frac{1}{\alpha}%
)}-3\left[  \frac{\Gamma(\frac{3}{\alpha})}{\Gamma(\frac{1}{\alpha})}\right]
^{2}\label{L12}\\
f(0)  &  =\log\left[  \frac{1}{\pi\alpha}\Gamma(\frac{1}{\alpha})\right]
\label{L13}%
\end{align}

From Eqs.(\ref{L10}-\ref{L13}) we obtain%
\begin{align*}
& S_{R,q}(\rho_{\alpha})\underset{q\rightarrow+\infty}{\asymp}\left[  \frac
{1}{\pi\alpha}\Gamma(\frac{1}{\alpha})\right]  ^{q}\left[  \frac{2\pi
\Gamma(1/\alpha)}{q\Gamma(3/\alpha)}\right]  ^{1/2}\times\\
& \left\{  1+\frac{1}{72q}\left[  \frac{\Gamma(5/\alpha)}{\Gamma(3/\alpha
)}-3\frac{\Gamma(3/\alpha)}{\Gamma(1/\alpha)}\right]  \right\}
\end{align*}

\bigskip

\section{ Conclusions}

\ In the perspective of the geometric interpretation of the R\'{e}nyi entropy
we obtained a duality relation. In general the duality principles are largely
used in the theory and applications of the convex optimization methods, so it
can be used in the interpretation and applications of the MaxEnt principles
for the \ R\'{e}nyi entropy, for the optimal approximation of \ the solution
of the Fokker-Planck equation. \newline

\ In the domain of the large values of the R\'{e}nyi parameter, when the
numerical methods does not work, we obtained an asymptotic expansion of the
R\'{e}nyi entropy for the symmetric stable L\'{e}vy distribution.

\section{References.}

\bigskip
\end{document}